\documentclass[12pt]{article}
\pagestyle{plain}
\thispagestyle{plain}
\usepackage{enumitem}
\usepackage{amsfonts}
\usepackage{amsmath}
\usepackage{cases}
\usepackage{amssymb}
\usepackage{graphicx}
\usepackage{cite}
\usepackage{textcomp}%
\usepackage{algorithmic}
\usepackage{algorithm}
\usepackage{color}
\usepackage{amsthm}
\usepackage{epsfig}
\usepackage{epstopdf}
\usepackage{multirow}
\usepackage{caption}
\usepackage{subfigure}
\usepackage{latexsym}
\usepackage{arydshln}
\usepackage{indentfirst}
\usepackage{xcolor}
\usepackage{lscape}
\usepackage{geometry}
 \geometry{left=2.8cm,right=2.8cm,top=3.6cm,bottom=3.6cm}
\numberwithin{equation}{section}

\newcommand{\RNum}[1]{\uppercase\expandafter{\romannumeral #1\relax}}

\newtheorem{theorem}{Theorem}[section]

\newtheorem{example}{Example}[section]
\newtheorem{open problem}{Open Problem}[section]

\newtheorem{lemma}{Lemma}[section]

\newtheorem{corollary}{Corollary}[section]

\allowdisplaybreaks

\begin{document}

\begin{center}{\Large \bf {The weight distributions of some linear
codes derived from Kloosterman sums}}\footnote {This work was supported by the National Natural Science Foundation of China
 under Grant 12371326. }\\
\vspace{0.5cm}
{Mengzhen Zhao \footnote{E-mail address: 22110491@bjtu.edu.cn}, Yanxun Chang \footnote{ E-mail address: yxchang@bjtu.edu.cn}}\\\

 {\small { {\it School of Mathematics and Statistics, Beijing Jiaotong University, Beijing 100044, China.}}}
 \vspace{0.8cm}
\end{center}

\begin{center}
\begin{minipage}{153mm}
{\bf \small  Abstract:} {\small{Linear codes with few weights have applications in data storage systems, secret
 sharing schemes, and authentication codes. In this paper, some kinds of $p$-ary linear
 codes with few weights are constructed  by use of the given defining set, where $p$ is a prime. Their weight distributions are determined  based on Kloosterman sums over finite fields. In addition, some linear codes we given is minimal.}}\\
{ \small \textbf{Keywords}: linear code, weight distribution, Kloosterman sum, minimal.\\}
\textbf{MSC}: {94A24; 94B05}
\end{minipage}
\end{center}
\section{Introduction}
Let $q = p^m$ and $F_q$ denote the finite field with $q$ elements, where $p$ is a prime and $m$ is a positive integer. An $[n, k, d]$ linear code $C$ over $F_p$ is a $k$-dimensional subspace of $F^n_p$ with minimum (Hamming) distance $d$. Let $A_i$ denote the number of codewords with Hamming weight $i$ in a code $C$ of length $n$. The weight enumerator of $C$ is defined by $1 + A_1z + A_2z^2 +\cdots+ A_nz^n$. Accordingly, the sequence $(1, A_1, \ldots , A_n)$ is called the weight distribution of $C$. A code $C$ is said to be a $t$-weight code if the number of nonzero $A_i$ in $(1, A_1, \ldots , A_n)$ is equal to $t$. Linear codes with few weights are important in many aspects, like as secret sharing schemes\cite{1984-C} and association schemes \cite{2005-C}.
As an important class of linear codes, minimal linear codes have applications in secret sharing schemes \cite{1993-M}. Therefore, the constructions of linear codes with few weights
and minimal linear codes have attracted much attention in coding theory.

In \cite{2015-D}, Ding et al. proposed a new construction of linear codes by use of defining sets and obtained a class of two-weight or three-weight linear codes. Let $D_0=\{d_{1},d_{2},\ldots,d_{n}\}\subseteq F_{q}^*$. A linear code from defining set $D_0$ is defined by $$C_{D_0}=\{c(a)=(Tr_{1}^{m}(ax))_{x\in D_0}:a\in F_{q}\},$$ where $Tr^m_1(\cdot)$ is the absolute trace function from $F_q$ to $F_p$.
Later, many linear codes with few weights  have been constructed from different $D_0$ and functions like trace function \cite{2016-H,2018-T}, bent function \cite{2016-Z} and plateaued function \cite{2020-M} for example.

In \cite{2017-L}, Li et al. extended Ding's construction and defined a $p$-ary linear code by
\begin{equation}\label{form}
C_D = \{c(a, b) = (Tr_1^m(ax + by))_{(x,y)\in D} : a, b \in F_q\}
\end{equation}
where $D = \{(x_1, y_1),(x_2, y_2),\ldots,(x_n, y_n)\}\subseteq (F_q\times F_q)\setminus\{(0, 0)\}$. They employed exponential sums to investigate the weight distribution of the linear code $C_D$, where $D = \{(x, y) \in (F_q\times F_q)\setminus\{(0, 0)\}: Tr(x^{N_1} + y^{N_2}) = 0\}$ and $N_1, N_2 \in\{1, 2, p^{\frac{m}{2}+1}\}$. In \cite{2019-J}, Jian et al. extended the results in \cite{2017-L} and considered the linear codes $C_D$ where $D = \{(x, y) \in (F_q\times F_q)\setminus\{(0, 0)\}: Tr(x^{N_1} + y^{p^u+1}) = 0\}$, where $N_1\in\{1,2\}$, $p$ is odd  prime and $u$ is a positive integer satisfied some conditions. In this paper, we obtain the complete weight distribution of a class of linear code ${C}_{D}$ of the form (\ref{form}) with a given defining set $D$ by using Kloosterman sums. And we obtain two classes of linear codes which are minimal constructed by deleting codewords from the original linear code ${C}_{D}$.

The rest of this paper is organized as follows. In Section 2, we review some basic notations and basic results  needed in the subsequent sections. In Section 3, we  mainly present the complete weight distributions of three kinds of linear codes. In Section 4, we give the explicit proofs of the main  results in Section 3.  In Section 5, we conclude this paper.
\section{Preliminaries}
Let $q=p^m$ and $m=2t$, where $p$ is a prime, $m,t$ are positive  integers and $t>2$.
Throughout this paper, we adopt the following notations unless otherwise stated:

\begin{itemize}
\item $Tr_{l_1}^{l_2}(x)$ is the trace function from the finite field $F_{p^{l_2}}$ to $F_{p^{l_1}}$ for any positive integer $l_1$ and $l_2$, where $l_1$ divides $l_2$.
\item $\xi_p=e^{2\pi\sqrt{-1}/p}$ is primitive $p$-th root of complex unity.
\item $\chi$ is the canonical additive character on $F_q$ , i.e., $\chi(x) = \xi_p^{Tr_1^m(x)}$ for any $x \in F_q$.
\item For any positive integer $l$, $\chi_{l}$ is the canonical additive character on $F_{p^l}$.

\end{itemize}

Let $\alpha$ be a generator of $F^*_{q}$, $\beta=\alpha^{p^{t}-1},$ and then  $order_q(\beta)=p^{t}+1$, where $order_q(\cdot)$ represents the order of any element in the multiplication group $F^*_q$.
Let $\Delta=\left<\beta\right>=\{\alpha^{j(p^{t}-1)}:0\leq j \leq p^{t}\}$, $\Gamma=\{\alpha^j:0\leq j\leq p^{t}\}$, and then $\Delta=\{v^{p^{t}-1}:v\in\Gamma\}.$
It is easy to know that $order_q(\alpha^{p^{t}+1})=p^{t}-1.$ Let $F^*_{p^{t}}=\left<\alpha^{p^{t}+1}\right>$ be a subgroup of $F_q^*$  and then $\Gamma$ is the set of the coset representative elements of $F_{p^{t}}^*$ of $F_q^*$. Thus, for each $x\in F^*_q$ , it has a unique decomposition as $x = uv,$ where $u \in F^*_{p^{t}}$ and $v \in \Gamma$.

With the above notations, we will give some useful results for later use.
The weights of the linear codes presented in this paper will be determined by the Kloosterman sums over finite field. For any $a\in F_{p^l}$, the Kloosterman sum of canonical additive character $\chi_{l}$
over $F_{p^l}$ at the point $a$ is defined by $$K_l(a)=\sum\limits_{x\in F_{p^l}^*}\chi_{l}(ax+\frac{1}{x}),$$ where $l$ is a positive integer.
The following results about Kloosterman sum of canonical additive character  will be useful in the sequel.

\begin{lemma}\label{bound}{\rm\cite[Theorem 5.45]{1997-L}} For any positive integer $l$ and  $a\in F^*_{p^l},$
$|K_l(a)|\leq2\sqrt{p^l}$.
\end{lemma}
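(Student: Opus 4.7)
The statement is the classical Weil bound on Kloosterman sums. My plan is to derive it through the connection between exponential sums and rational points on algebraic curves: realize $K_l(a)$ as (minus) the trace of Frobenius on a rank-$2$ cohomology group attached to an Artin--Schreier cover of the multiplicative group, and then invoke Weil's theorem (the Riemann hypothesis for function fields of curves over finite fields).

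Concretely, I would work with the affine curve $C_a$ over $F_{p^l}$ defined by $y^p - y = ax + x^{-1}$. Using the elementary fact that $y^p - y = c$ has exactly $p$ solutions in $F_{p^l}$ when $Tr_1^l(c) = 0$ and none otherwise, together with the orthogonality identity $\frac{1}{p}\sum_{t\in F_p}\xi_p^{t\,Tr_1^l(c)}$ that equals $1$ when $Tr_1^l(c) = 0$ and $0$ otherwise, the number of affine $F_{p^l}$-points of $C_a$ with $x \neq 0$ satisfies
\begin{align*}
\#C_a^{\mathrm{aff}}(F_{p^l})
&= \sum_{x \in F_{p^l}^*}\sum_{t \in F_p}\chi_l(t(ax + x^{-1})) \\
&= (p^l - 1) + \sum_{t \in F_p^*} K_l(at^2),
\end{align*}
where the reduction of the inner sum to $K_l(at^2)$ uses the substitution $x \mapsto x/t$ together with the elementary identity $\sum_y \chi_l(\alpha y + \beta/y) = K_l(\alpha\beta)$ valid whenever $\alpha \neq 0$. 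After compactification and desingularization, Riemann--Hurwitz gives genus $g = p - 1$ (both $x = 0$ and $x = \infty$ are totally ramified with conductor $2$), and the Hasse--Weil bound $|\#\bar{C}_a(F_{p^l}) - p^l - 1| \leq 2g\sqrt{p^l}$ then controls $\sum_{t \in F_p^*} K_l(at^2)$ in aggregate.

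The main obstacle is refining this aggregate estimate to a pointwise bound on the individual sum $K_l(a)$. The cleanest resolution is cohomological: $-K_l(a)$ is the trace of Frobenius on the stalk at $a$ of a rank-$2$ lisse sheaf on the multiplicative group (the Kloosterman sheaf), which Deligne proved to be pure of weight $1$. Writing its two Frobenius eigenvalues as $\alpha, \beta$ with $\alpha\beta = p^l$ and $|\alpha| = |\beta| = \sqrt{p^l}$ then yields $|K_l(a)| = |\alpha + \beta| \leq 2\sqrt{p^l}$. A more elementary alternative, due to Weil, expresses each Kloosterman sum in terms of Gauss sums and uses the Davenport--Hasse relation to lift the bound from the base case $l = 1$. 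Since this is a standard textbook result, the paper simply cites \cite{1997-L} rather than reproducing either argument in full.
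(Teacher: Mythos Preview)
The paper does not prove this lemma at all; it is stated with a citation to \cite[Theorem~5.45]{1997-L} and used as a black box. You correctly identify this in your final sentence, so your proposal matches the paper's treatment. The preceding Artin--Schreier/Hasse--Weil/Deligne sketch you give is a sound high-level outline of the underlying argument (including your honest flag that the curve count only bounds the aggregate $\sum_{t\in F_p^*}K_l(at^2)$ and that a refinement---purity of the Kloosterman sheaf, or the Hasse--Davenport lifting---is needed for the individual sum), but it goes well beyond anything the paper does and is not required here.
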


Combining the proof of \cite[Theorem 3]{2006-N} about Kloosterman sum with  \cite[Theorem 1]{2006-H}, we summarize the result as follows with our notations in this paper and give a brief proof.

\begin{lemma}\label{equl}
For any $a\in F_{p^t}^*\subseteq F_{q}^*,$ we have $$\sum\limits_{x\in \Delta}\chi(ax)=-K_{t}(a^2).$$
\end{lemma}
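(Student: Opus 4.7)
The plan is to parametrize $\Delta$ as the norm-one subgroup of the quadratic extension $F_q/F_{p^t}$, reduce $S=\sum_{x\in\Delta}\chi(ax)$ to a sum over $F_{p^t}$ twisted by a quadratic-character factor, and then recognize the resulting expression as $-K_t(a^2)$ via a Zhukovsky-type change of variables.

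First I would observe that $\Delta$ is exactly the kernel of the norm map $N\colon F_q^{*}\to F_{p^t}^{*}$, i.e.\ $\Delta=\{x\in F_q^{*}:x^{p^t+1}=1\}$. Assuming $p$ odd, I fix $\gamma\in F_q\setminus F_{p^t}$ with $\gamma^{2}=\delta\in F_{p^t}^{*}\setminus(F_{p^t}^{*})^{2}$ and write every $x\in F_q$ as $x=u+v\gamma$ with $u,v\in F_{p^t}$. Then $x\in\Delta$ is equivalent to $u^{2}-\delta v^{2}=1$. Since $a\in F_{p^t}$, the transitivity of the trace gives
\begin{equation*}
Tr_{1}^{m}(ax)=Tr_{1}^{t}\bigl(ax+(ax)^{p^{t}}\bigr)=Tr_{1}^{t}\bigl(a(x+x^{p^{t}})\bigr)=Tr_{1}^{t}(2au).
\end{equation*}
Collecting by $u$ and using the quadratic character $\eta$ of $F_{p^t}^{*}$ (so $\eta(\delta)=-1$), the count $\#\{v\in F_{p^t}:\delta v^{2}=u^{2}-1\}$ equals $1-\eta(u^{2}-1)$, and therefore
\begin{equation*}
S=\sum_{u\in F_{p^t}}\bigl(1-\eta(u^{2}-1)\bigr)\chi_{t}(2au)=-\sum_{u\in F_{p^t}}\eta(u^{2}-1)\,\chi_{t}(2au),
\end{equation*}
where the first piece vanishes because $2a\neq0$ and $\sum_{u}\chi_{t}(2au)=0$.

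Next I would rewrite the Kloosterman target in a matching shape. Substituting $y=z/a$ (permissible since $a\in F_{p^t}^{*}$) yields
\begin{equation*}
K_{t}(a^{2})=\sum_{z\in F_{p^t}^{*}}\chi_{t}\bigl(a(z+z^{-1})\bigr),
\end{equation*}
so it suffices to prove $\sum_{u}\eta(u^{2}-1)\chi_{t}(2au)=\sum_{z\in F_{p^t}^{*}}\chi_{t}(a(z+z^{-1}))$. I would do this with the Zhukovsky substitution $u=(z+z^{-1})/2$. The identity $u^{2}-1=((z-z^{-1})/2)^{2}$ shows that this map is a $2$-to-$1$ cover of $\{u\in F_{p^t}:u^{2}-1\text{ is a nonzero square}\}$, ramified at $z=\pm1$ over $u=\pm1$. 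Expanding both sides by this correspondence and invoking $\sum_{u\in F_{p^t}}\chi_{t}(2au)=0$ one more time reconciles the two sums.

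The principal obstacle is keeping accurate track of the ramification at $u=\pm1$ and the nonsquare versus square cases of $u^{2}-1$ so that the counts line up with the $2$-to-$1$ Zhukovsky map; Lemma \ref{bound} is not used but is conceptually consistent with the resulting Weil-type identity. The argument as stated needs $p$ odd; for $p=2$ one substitutes an Artin--Schreier parametrization ($\gamma$ with $\gamma^{2}+\gamma\in F_{2^t}$) and repeats the same skeleton, replacing $\eta$ by the corresponding binary character and $K_t(a^2)$ by its characteristic-two analogue.
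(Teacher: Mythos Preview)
Your argument for odd $p$ is correct: the parametrization $x=u+v\gamma$ with $\gamma^{2}=\delta$ a nonsquare, the fibre count $1-\eta(u^{2}-1)$, and the Zhukovsky substitution $u=(z+z^{-1})/2$ all check out; the terse final ``reconciliation'' amounts to the elementary identity $A-B=2A+C$ once one uses $A+B+C=\sum_{u}\chi_{t}(2au)=0$, where $A,B,C$ are the partial sums over $u$ with $u^{2}-1$ a nonzero square, a nonsquare, and zero respectively. Your route, however, is genuinely different from the paper's. The paper does not compute from scratch: for $p$ odd it invokes Theorem~1 of Helleseth--Kholosha as a black box (which already gives $\sum_{j}\chi_{t}\bigl(Tr_{t}^{m}(a\alpha^{j(p^{t}-1)})\bigr)=-\sum_{x\in F_{p^{t}}^{*}}\chi_{t}(x+a^{p^{t}+1}/x)$) and then makes the single substitution $x\mapsto ax$ together with $a^{p^{t}-1}=1$; for $p=2$ it simply quotes Leander's proof. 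What your approach buys is a self-contained, elementary derivation that exposes the geometry of the norm-one torus and the role of the quadratic character; what the paper's approach buys is brevity by outsourcing the substantive step to the literature. One caveat: your $p=2$ sketch is thin---there is no multiplicative quadratic character in characteristic two, and the ``corresponding binary character'' you allude to is really the additive Artin--Schreier indicator (the fibre count becomes $1-\chi_{t}(1/v)$ after a change of variable), together with the Frobenius identity $K_{t}(a^{2})=K_{t}(a)$; spelling this out would make the even case as solid as the odd one.
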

\begin{proof}

By the proof in  \cite[Theorem 3]{2006-N}, for $p=2$, we have
\begin{eqnarray*}
  \sum\limits_{x\in \Delta}\chi(ax) &=& -1-\sum\limits_{\gamma\in a^{-1}F_{2^t}\setminus a^{-1}F_2}\chi_{t}(a^2\gamma+\frac{1}{\gamma})-\chi_{t}(a^2a^{-1}+\frac{1}{a^{-1}})+1\\
   &=& -K_t(a^2).
\end{eqnarray*}
By Theorem 1 in \cite{2006-H} and $a^{p^t-1}=1$ as $a\in F_{p^t}^*,$  for $p>2$, we have
\begin{eqnarray*}
  \sum\limits_{x\in \Delta}\chi(ax) &=& \sum\limits_{j=0}^{p^t}\chi(a\alpha^{j(p^t-1)})
=\sum\limits_{j=0}^{p^t}\chi_{t}(Tr_{t}^{m}(a\alpha^{j(p^t-1)})) \\
   &=&-\sum\limits_{x\in F_{p^t}^*}\chi_t(x+ \frac{a^{p^t+1}}{x})=-\sum\limits_{x\in F_{p^t}^*}\chi_t(a^2x+ \frac{a^{p^t-1}}{x})\\
   &=&-\sum\limits_{x\in F_{p^t}^*}\chi_t(a^2x+ \frac{1}{x})=-K_{t}(a^2).
\end{eqnarray*}

\end{proof}

%
%

\begin{lemma}{\rm\cite[Theorem 5.46]{1997-L}}
For any $a\in F_q^*,$ $$K_{t}(a)=-\sum\limits_{i=0}^{\lfloor\frac{t}{2}\rfloor}(-1)^{t-i}\frac{t}{t-i}\binom{t-i}{i}p^i(K_1(a))^{t-2i}.$$
\end{lemma}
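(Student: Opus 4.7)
My plan is to reduce the identity to two classical ingredients: a \emph{lifting formula} that expresses $K_t(a)$ as a power sum of two complex numbers determined by $K_1(a)$, followed by the Waring power-sum identity. Specifically, by Weil's theorem applied to the Kloosterman $L$-function of the affine curve $xy=a$, there exist $\omega_1,\omega_2 \in \mathbb{C}$ with
$$\omega_1 + \omega_2 = -K_1(a), \qquad \omega_1 \omega_2 = p,$$
and $K_t(a) = -\omega_1^t - \omega_2^t$ for every $t\geq 1$. Once this is in hand, the remaining task is purely algebraic: express the power sum $\omega_1^t + \omega_2^t$ in terms of the elementary symmetric polynomials $e_1 = \omega_1+\omega_2$ and $e_2 = \omega_1\omega_2$.

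\textbf{Step 1 (Lifting).} First I would verify the three-term linear recurrence
$$K_{t+1}(a) + K_1(a)\, K_t(a) + p\, K_{t-1}(a) = 0 \qquad (t\geq 1),$$
with the initial values $K_0(a) = -2$ and $K_1(a)$ itself. The cleanest route is via the Kloosterman $L$-function $L(T,a) = \exp\!\bigl(\sum_{t\geq 1} K_t(a) T^t/t\bigr)$, which Weil's theorem identifies as the degree-two polynomial $(1+\omega_1 T)(1+\omega_2 T) = 1 + K_1(a) T + p T^2$; an alternative derivation uses the Hasse--Davenport relation for the associated Gauss-sum decomposition. The sequence $-(\omega_1^t+\omega_2^t)$ satisfies the same recurrence with the same initial values (from $\omega_i^{t+1} = -K_1(a)\omega_i^t - p\omega_i^{t-1}$), so the two coincide.

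\textbf{Step 2 (Waring).} Next I would apply the classical Waring identity for two variables,
$$\omega_1^t + \omega_2^t = \sum_{i=0}^{\lfloor t/2\rfloor} (-1)^i\, \frac{t}{t-i}\binom{t-i}{i}\, e_1^{t-2i}\, e_2^i,$$
which one derives by extracting coefficients from $-\log(1 - e_1 z + e_2 z^2) = \sum_{t\geq 1}(\omega_1^t+\omega_2^t)\, z^t / t$ via the generalized binomial theorem. Substituting $e_1 = -K_1(a)$ and $e_2 = p$, and using $(-1)^i(-K_1(a))^{t-2i} = (-1)^{t-i}(K_1(a))^{t-2i}$ (since $t$ and $t-2i$ share the same parity), then prefixing a minus sign via $K_t(a) = -(\omega_1^t+\omega_2^t)$, recovers the claimed formula exactly.

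The main obstacle is Step 1: the lifting formula is the substantive input, encoding essentially the same deep information as the Weil-type bound $|K_t(a)| \leq 2\sqrt{p^t}$ recorded in the previous lemma. This is why the identity is cited from Lidl--Niederreiter rather than proved from scratch here; Step 2, by contrast, is a routine symmetric-function computation.
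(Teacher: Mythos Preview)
The paper does not supply its own proof of this lemma; it is quoted verbatim as \cite[Theorem~5.46]{1997-L} and used as a black box, exactly as you note in your final paragraph. Your sketch is correct and is in fact the argument Lidl--Niederreiter give: their Theorem~5.43 establishes the lifting formula $K_t(a)=-(\omega_1^t+\omega_2^t)$ with $\omega_1+\omega_2=-K_1(a)$ and $\omega_1\omega_2=p$ (for $a\in F_p^\ast$; the hypothesis $a\in F_q^\ast$ in the paper's statement is a slip), and Theorem~5.46 then feeds this into the two-variable Waring identity, exactly your Steps~1 and~2. So there is nothing to compare---you have reconstructed the cited proof.
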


In addition, the known Lemmas \ref{tanpan}, \ref{Pless} and \ref{character} are important in the proofs of our main results.

\begin{lemma}\label{tanpan}{\rm\cite[Lemma 1]{2018-T}}
For any given $a \in F^*_q,$ there is one and only one $v_a \in \Gamma$ such that $Tr_t^m(av_a) = 0$.
\end{lemma}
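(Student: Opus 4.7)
The plan is to prove Lemma \ref{tanpan} by a short counting argument that exploits the $F_{p^t}$-linearity of the relative trace $Tr_t^m$ and the coset decomposition $F_q^{*}=\bigsqcup_{v\in\Gamma}vF_{p^t}^{*}$ introduced just before the lemma.

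First I would set
\[
V_a=\{v\in F_q^{*}:Tr_t^m(av)=0\}.
\]
Since $Tr_t^m\colon F_q\to F_{p^t}$ is a surjective $F_{p^t}$-linear map, its kernel is an $F_{p^t}$-subspace of $F_q$ of dimension $1$ (because $[F_q:F_{p^t}]=2$), so the kernel has $p^t$ elements. As $a\in F_q^{*}$, multiplication by $a$ is a bijection of $F_q$, and hence $|V_a|=p^t-1$.

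Next, I would observe that $V_a$ is stable under multiplication by elements of $F_{p^t}^{*}$: for any $u\in F_{p^t}^{*}$ and $v\in F_q^{*}$, the $F_{p^t}$-linearity of $Tr_t^m$ gives $Tr_t^m(a(uv))=u\,Tr_t^m(av)$, which vanishes if and only if $Tr_t^m(av)=0$. Consequently $V_a$ is a union of cosets of $F_{p^t}^{*}$ in $F_q^{*}$. Recalling from the preliminaries that $\Gamma=\{\alpha^j:0\leq j\leq p^t\}$ is a complete set of coset representatives for $F_{p^t}^{*}$ in $F_q^{*}$, so that $F_q^{*}=\bigsqcup_{v\in\Gamma}vF_{p^t}^{*}$ and each coset has size $p^t-1$, the number of $v\in\Gamma$ with $Tr_t^m(av)=0$ equals
\[
\frac{|V_a|}{p^t-1}=\frac{p^t-1}{p^t-1}=1,
\]
which yields existence and uniqueness of $v_a$.

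There is no real obstacle: the only conceptual ingredient is noticing that the "good" set $V_a$ is invariant under the $F_{p^t}^{*}$-action, which forces it to be a single full coset and thus hit exactly one representative in $\Gamma$. If needed I would double-check that $|\Gamma|=p^t+1$ (true because $|F_q^{*}|/|F_{p^t}^{*}|=(p^{2t}-1)/(p^t-1)=p^t+1$) so that the coset decomposition used above is valid. The argument does not require any Kloosterman machinery and is independent of the parity of $p$.
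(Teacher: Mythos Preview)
Your argument is correct and complete. The paper itself does not give a proof of this lemma; it simply quotes it as \cite[Lemma~1]{2018-T}. So there is no in-paper proof to compare against. Your counting argument via the $F_{p^t}$-linearity of $Tr_t^m$ and the coset decomposition $F_q^{*}=\bigsqcup_{v\in\Gamma}vF_{p^t}^{*}$ is exactly the natural self-contained route: the kernel of $Tr_t^m$ is a one-dimensional $F_{p^t}$-subspace (since $[F_q:F_{p^t}]=2$), its nonzero translate $a^{-1}\ker(Tr_t^m)\setminus\{0\}$ has size $p^t-1$ and is a single $F_{p^t}^{*}$-coset, hence meets the transversal $\Gamma$ in a unique point.
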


\begin{lemma}\label{Pless}{\rm \cite[p.259]{2003-H}}
For an $[n, k, d]$ code $C$ over $F_p$ with weight distribution $(1, A_1,\ldots, A_n),$ suppose that the weight distribution of its dual
code is $(1, A^{\bot}_1,\ldots, A^{\bot}_n),$ then the first two Pless power moments are
$$\sum\limits_{j=0}^{n}A_j=p^k,$$and
$$\sum\limits_{j=0}^njA_j=p^{k-1}(pn-n-A^{\bot}_1).$$
\end{lemma}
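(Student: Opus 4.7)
The plan is to prove both Pless power moment identities by elementary double counting, bypassing the more machinery-heavy route through the MacWilliams transform. The first identity is essentially a tautology: the weight enumerator of $C$ evaluated at $z = 1$ equals $|C| = p^k$, which gives $\sum_{j=0}^n A_j = p^k$ at once. So the only real content lies in the second moment.

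For the second identity, I would interpret $\sum_{j=0}^n j A_j$ as the cardinality of the set of pairs $(c, i)$ with $c \in C$ and $1 \le i \le n$ for which the $i$-th coordinate of $c$ is nonzero, and then reverse the order of summation:
$$\sum_{j=0}^n j A_j \;=\; \sum_{c \in C} \mathrm{wt}(c) \;=\; \sum_{i=1}^n \bigl|\{c \in C : c_i \ne 0\}\bigr|.$$
For each coordinate $i$ the projection $\pi_i \colon C \to F_p$, $c \mapsto c_i$, is $F_p$-linear, so its image is either $\{0\}$ or all of $F_p$; in the former case the inner count is $0$, while in the latter case rank--nullity gives a kernel of size $|C|/p = p^{k-1}$ and hence a contribution of $(p-1)p^{k-1}$ to the outer sum.

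To close the argument I would identify the number $N_0$ of \emph{trivial} coordinates (those $i$ with $\pi_i \equiv 0$) with the weight-one codewords of the dual code. A single-coordinate vector $\alpha e_i$ with $\alpha \ne 0$ lies in $C^{\perp}$ precisely when $c_i = 0$ for every $c \in C$, so each trivial coordinate is responsible for exactly $p-1$ weight-one dual codewords, yielding $A_1^{\perp} = (p-1) N_0$. Substituting this,
$$\sum_{j=0}^n j A_j = (n - N_0)(p-1)p^{k-1} = p^{k-1}\bigl((p-1)n - A_1^{\perp}\bigr) = p^{k-1}(pn - n - A_1^{\perp}),$$
as required. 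The only real obstacle is this small combinatorial bookkeeping linking $N_0$ to $A_1^{\perp}$ through the $p-1$ nonzero scalars; once it is in place, the two identities write themselves.
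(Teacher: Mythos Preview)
Your argument is correct: the first identity is indeed trivial, and the double-counting approach for the second identity --- projecting onto each coordinate, splitting coordinates into trivial and nontrivial, and identifying the trivial ones with weight-one dual codewords via $A_1^{\perp}=(p-1)N_0$ --- is a clean and complete proof of the stated formula.

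There is nothing to compare against here: the paper does not prove this lemma at all but simply quotes it from Huffman--Pless \cite[p.~259]{2003-H} as a standard fact (the Pless power moments). Your elementary derivation is a self-contained alternative to the textbook route via the MacWilliams identities and is perfectly adequate for the purpose.
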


\begin{lemma}\label{character}{\rm\cite{1997-L}}
The orthogonal property of nontrivial additive character $\psi$ over finite field $E$ is given by
$$\sum\limits_{x\in E}\psi(zx)=\left\{\begin{array}{ll}
              0,& z\not=0,\\
             |E| ,&z=0.
             \end{array}\right.$$
\end{lemma}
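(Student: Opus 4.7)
The plan is to split the statement into the two cases dictated by the piecewise formula and handle each by a standard character-sum argument.

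First I would dispose of the trivial case $z=0$. Here $\psi(zx)=\psi(0)=1$ for every $x\in E$, since any additive character sends the identity of the additive group to $1$. Summing over the $|E|$ elements of $E$ immediately yields $|E|$, matching the second branch of the piecewise expression.

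Next I would handle the genuinely nontrivial case $z\neq 0$. The key observation is that multiplication by a nonzero element of a field is a bijection on $E$, so as $x$ runs over $E$, so does $y:=zx$. This reduces the claim to showing $\sum_{y\in E}\psi(y)=0$. For this, I would invoke the hypothesis that $\psi$ is nontrivial: there exists $y_0\in E$ with $\psi(y_0)\neq 1$. Let $S=\sum_{y\in E}\psi(y)$. Using additivity of $\psi$ and the fact that the map $y\mapsto y+y_0$ is a bijection on $E$, one obtains
\[
\psi(y_0)\,S \;=\; \sum_{y\in E}\psi(y_0)\psi(y) \;=\; \sum_{y\in E}\psi(y+y_0) \;=\; \sum_{y'\in E}\psi(y') \;=\; S.
\]
Hence $(\psi(y_0)-1)\,S=0$, and since $\psi(y_0)\neq 1$ we conclude $S=0$, which is the first branch of the piecewise expression.

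There is no serious obstacle here; this is a textbook orthogonality relation for nontrivial characters of a finite abelian group, specialized to the additive group of $E$. The only minor point requiring care is the existence of the element $y_0$, which is precisely the content of the nontriviality hypothesis on $\psi$, and the verification that the translation $y\mapsto y+y_0$ is a permutation of $E$, which is immediate from the group structure. Combining the two cases gives the formula as stated.
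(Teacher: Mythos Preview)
Your argument is correct and is exactly the standard shift-invariance proof of the orthogonality relation for a nontrivial additive character. The paper itself does not supply a proof of this lemma at all; it merely states the result and cites Lidl--Niederreiter \cite{1997-L}. So there is nothing to compare against beyond noting that what you wrote is precisely the textbook derivation one finds in that reference.
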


\section{The weight distributions of some linear codes}
In this section, we only present the complete weight distributions of three kinds of linear codes. The explicit proofs of the following  results will be reveal in next section.
 Define a set
\begin{equation}\label{D}
  D=\{(x,y)\in (F_q\times F_q)\setminus\{(0,0)\}:Tr_1^m(x+y^{p^t-1})=0\},
\end{equation}
where $Tr_{1}^{m}(\cdot)$ is the trace function from the finite field $F_{q}$ to $F_{p}$. Then we can obtain some classes of linear codes by using $D$ as defining set.

\begin{theorem}
Let $C_D$ be the linear code over $F_p$ with defining set $D$ in $(\ref{D})$. Then $C_D$ is a four-weight
linear code with parameters $[n, 2m]$ and weight distribution in {\rm Table~\ref{tab1},} where $n=p^{2m-1}-1,$ and $$S=\sum\limits_{ z\in F_p^*}\sum\limits_{x\in \Delta}\chi(zx)=\sum\limits_{i=0}^{\lfloor t/2\rfloor}(-1)^{t-i}\frac{t}{t-i}\binom{t-i}{i}p^i\sum\limits_{z\in F_p^*}(K_1(z^2))^{t-2i}.$$

\begin{table}[ht]
\caption{Weight distribution of $C_D$}\label{tab1}%
\center{
\begin{tabular}{@{}ll @{}ll @{}}
\hline
Type&Weight & Frequency \\
\hline
$w_0$&0&1\\
$w_1$&$p^{2m-2}(p-1)~$~&$p^m(p^m-p+1)-1$\\
$w_2$&$(p^{2m-2}-p^{m-2})(p-1)-p^{m-2}(p^t-1)S~$&$p-1$\\
$w_3$&$(p^{2m-2}-p^{m-2})(p-1)-p^{m-2}(p^{t}(p-1)-S)~$~&$\frac{(p-1)(p^{m}-1)+(p^{t+1}-p^{t}-p+1)S}{p}$\\
$w_4$&$(p^{2m-2}-p^{m-2})(p-1)+p^{m-2}(p^{t}+S)~$~&$\frac{(p^2-2p+1)(p^m-1)-(p^{t+1}-p^{t}-p+1)S}{p}$\\
\hline
\end{tabular}}
\end{table}
\end{theorem}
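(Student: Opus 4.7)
The plan is to compute the Hamming weight of each codeword $c(a,b)$ by the standard orthogonality trick. Writing $w(c(a,b))=|D|-N(a,b)$ where $N(a,b)=|\{(x,y)\in D:Tr_1^m(ax+by)=0\}|$, and applying Lemma~\ref{character} twice—once to detect the vanishing of $Tr_1^m(ax+by)$ and once to replace the constraint $(x,y)\in D$ by a free sum over $F_q\times F_q$—the problem reduces to evaluating
\[
T(a,b)=\sum_{z\in F_p^*}\sum_{w\in F_p}\frac{1}{p}\sum_{x,y\in F_q}\chi\bigl((za+w)x+zby+wy^{p^t-1}\bigr)
\]
up to an explicit additive constant. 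The $x$-sum kills every term with $za+w\neq 0$, so the entire inner triple sum vanishes unless some $w\in F_p$ satisfies $za+w=0$; since $z\in F_p^*$ this is possible only when $a\in F_p$.

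Four natural cases emerge. If $(a,b)=(0,0)$ the weight is $0$. If $a\notin F_p$, or if $a=0$ and $b\neq 0$, every $x$-sum vanishes and $T(a,b)=-(p-1)$, producing the weight $w_1=(p-1)p^{2m-2}$; a direct count of such $(a,b)$ gives the listed frequency. For $a\in F_p^*$, setting $w=-za$ reduces the problem to $\sum_{y\in F_q}\chi(zby-zay^{p^t-1})$. Decomposing $y=uv$ with $u\in F_{p^t}^*$, $v\in\Gamma$, so that $y^{p^t-1}=v^{p^t-1}$, the $u$-sum over $F_{p^t}^*$ equals $p^t-1$ when $Tr_t^m(zbv)=0$ and $-1$ otherwise. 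For $b=0$ this condition is vacuous; summing on $\Gamma$ via the bijection $v\mapsto v^{p^t-1}$ onto $\Delta$ and applying Lemma~\ref{equl} produces $-K_t(z^2a^2)$, and summing further on $z\in F_p^*$ brings in the Kloosterman quantity $S$, yielding $w_2$ with frequency $p-1$. For $b\neq 0$, Lemma~\ref{tanpan} singles out the unique $v_b\in\Gamma$ with $Tr_t^m(bv_b)=0$; since $z\in F_p\subseteq F_{p^t}$, the same $v_b$ serves for $zb$, so $v_{zb}=v_b$ is independent of $z$. Isolating this term and reassembling the remainder of the $v$-sum via Lemma~\ref{equl} gives
\[
\sum_{y\in F_q}\chi(zby-zay^{p^t-1})=1+p^t\chi\bigl(-zav_b^{p^t-1}\bigr)+K_t(z^2a^2).
\]

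A dichotomy now governs the outcome. On summing over $z\in F_p^*$, the term $\sum_z\chi(-zav_b^{p^t-1})$ equals $p-1$ or $-1$ according to whether $Tr_1^m(v_b^{p^t-1})$ vanishes, and the two possibilities produce precisely the weights $w_3$ and $w_4$ after elementary algebra. For the frequencies, the only nontrivial count is the number of $b\in F_q^*$ with $Tr_1^m(v_b^{p^t-1})=0$. Writing $b=u_0v_0$, the map $v_0\mapsto v_{v_0}$ is a self-inverse bijection on $\Gamma$ (immediate from Lemma~\ref{tanpan}), and $v\mapsto v^{p^t-1}$ is a bijection $\Gamma\to\Delta$; hence the count equals $(p^t-1)\cdot|\{\eta\in\Delta:Tr_1^m(\eta)=0\}|$, and one more orthogonality calculation evaluates the bracket as $(p^t+1+S)/p$. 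Substituting produces the tabulated frequencies of $w_3$ and $w_4$, with the first Pless power moment (Lemma~\ref{Pless}) available as a consistency check. I expect the main obstacle to be the bookkeeping in the case $b\neq 0$—in particular, verifying both the $z$-independence of $v_b$ and the equidistribution of $v_b^{p^t-1}$ over $\Delta$—since a slip in either would destroy the reduction of all frequency counts to the single invariant $S$.
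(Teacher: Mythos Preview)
Your weight computation is essentially the paper's Lemma~4.2 with the same decomposition $y=uv$, the same use of Lemma~\ref{tanpan} to single out $v_b$, and the same appeal to Lemma~\ref{equl} to produce $K_t$; the formula $1+p^t\chi(-zav_b^{p^t-1})+K_t(z^2a^2)$ and its consequences match the paper exactly. Where you diverge is in obtaining the frequencies of $w_3$ and $w_4$. The paper does \emph{not} count directly: it first shows (Lemma~\ref{dual}) that the dual code $C_D^\perp$ has minimum distance at least $2$, hence $A_1^\perp=0$, and then solves for $A_{w_3},A_{w_4}$ from the first two Pless power moments (Lemma~\ref{Pless}) together with the already-known values $A_{w_1}=p^m(p^m-p+1)-1$ and $A_{w_2}=p-1$.

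Your route---observing that $v_b$ depends only on the $\Gamma$-component of $b$, that $v_0\mapsto v_{v_0}$ is an involution on $\Gamma$, and that $v\mapsto v^{p^t-1}$ is a bijection $\Gamma\to\Delta$, so that the count reduces to $(p^t-1)\cdot|\{\eta\in\Delta:Tr_1^m(\eta)=0\}|=(p^t-1)(p^t+1+S)/p$---is correct and yields exactly the tabulated $A_{w_3}$ after the factorisations $p^m-1=(p^t-1)(p^t+1)$ and $p^{t+1}-p^t-p+1=(p-1)(p^t-1)$. This is a genuinely different and arguably cleaner argument: it avoids the dual-code detour entirely and makes transparent \emph{why} the single invariant $S$ controls both the weights and the frequencies. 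The paper's approach, on the other hand, is more mechanical and requires no structural facts about $\Gamma$ beyond Lemma~\ref{tanpan}, trading insight for routine linear algebra.
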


\begin{example}
When $p=2,$ we know that $K_1(1)=1,$ so that $$S=\sum\limits_{i=0}^{\lfloor t/2\rfloor}(-1)^{t-i}\frac{t}{t-i}\binom{t-i}{i}2^i.$$
Let $t=3,$ then we have $S=5$. By magma, we can obtain the parameter of code $C_D$ is $[2047, 12, 448],$ and the weight enumerator is $1+x^{448}+49x^{960}+4031x^{1024}+14x^{1216},$ which agree with the distribution in {\rm Table \ref{tab1}}.
\end{example}

It is straightforward that the weight frequency of weight $w_2$ is equal to $|F_p^*|$. That inspires us to obtain two subcode ${C}_{D1}$ and $C_{D2}$ of $C_D$ by deleting some codewords, whose weights can be derived from $C_{D}$.

\begin{corollary}\label{1D}
Let $$C_{D1}=\{c(a,b)=(Tr_1^m(ax+by))_{(x,y)\in D}:a\in T,b\in F_q\},$$
where $T\subseteq F_{q}$ is a $F_p$-module such that $T\cap F_p=\{0\}$ and $|T|=p^r, 1\leq r \leq m-1,$
then $C_{D1}$ is a $[p^{2m-1}-1,m+r]$ linear code with constant weight $p^{2m-2}(p-1)$.
\end{corollary}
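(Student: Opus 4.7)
The plan is to establish the three stated properties -- length $n=p^{2m-1}-1$, dimension $m+r$, and constant weight $p^{2m-2}(p-1)$ -- by reducing everything to a single character-sum calculation that is forced to trivialize thanks to the defining hypothesis $T\cap F_p=\{0\}$.

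The length is immediate from $n=|D|=p^{2m-1}-1$, which is already used in the theorem. For the weight claim, fix $(a,b)\in(T\times F_q)\setminus\{(0,0)\}$ and set $N(a,b)=\#\{(x,y)\in D:Tr_1^m(ax+by)=0\}$, so that $\mathrm{wt}(c(a,b))=n-N(a,b)$. Using the orthogonality relation of Lemma~\ref{character} over $F_p$ to indicator the two trace conditions, I will expand
\begin{equation*}
  N(a,b)+1=\frac{1}{p^2}\sum_{u,v\in F_p}\sum_{x,y\in F_q}\chi\bigl((u+va)x+uy^{p^t-1}+vby\bigr),
\end{equation*}
where the $+1$ restores the excluded point $(0,0)$. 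The inner sum in $x$ vanishes unless $u+va=0$, which localizes the outer sum in $(u,v)$.

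The key step is to argue that the surviving $(u,v)$-sum contains only the term $(0,0)$. If $a=0$ (which is allowed only when $b\neq 0$), any $v\in F_p^*$ already forces $u=0$, so the single nontrivial $y$-sum $\sum_y\chi(vby)$ vanishes by Lemma~\ref{character}. If $a\neq 0$, then the hypothesis $T\cap F_p=\{0\}$ gives $a\in F_q\setminus F_p$, so for every $v\in F_p^*$ we have $va\notin F_p$ and hence $u+va\neq 0$ for all $u\in F_p$; thus only $(u,v)=(0,0)$ contributes. In either case the expansion collapses to $N(a,b)+1=q^2/p^2=p^{2m-2}$, giving $\mathrm{wt}(c(a,b))=p^{2m-1}-1-(p^{2m-2}-1)=p^{2m-2}(p-1)$, which is exactly the weight $w_1$ of Table~\ref{tab1}.

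With constant weight in hand, injectivity of the $F_p$-linear map $(a,b)\mapsto c(a,b)$ on $T\times F_q$ is automatic: any $(a,b)$ producing the zero codeword would violate the weight equality unless $(a,b)=(0,0)$. Since $T$ is an $F_p$-subspace of $F_q$ of dimension $r$, this yields $\dim_{F_p}C_{D1}=\dim_{F_p}(T\times F_q)=m+r$, completing the proof. I do not anticipate a serious obstacle; the only conceptual point worth stressing is that the condition $T\cap F_p=\{0\}$ is precisely what kills every character-sum term responsible for the non-generic weights $w_2,w_3,w_4$ in Theorem~3.1, leaving only the trivial contribution $u=v=0$.
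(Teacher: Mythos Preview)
Your argument is correct and is essentially the paper's own approach: the paper simply observes that $T\cap F_p=\{0\}$ forces $a\notin F_p^*$ for every $a\in T$ and then invokes the first case of Lemma~\ref{weight}, whose proof is exactly the character-sum collapse you carry out inline. The only difference is cosmetic --- you redo that case from scratch and spell out the injectivity/dimension step, whereas the paper cites the lemma and leaves the dimension implicit.
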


\begin{example}
Let $\alpha$ be the root of irreducible polynomial $f(x)=x^6+x+1$ over $F_2,$ $t=3$ and $T=\{\sum\limits_{i=1}^{5}a_i\alpha^i:a_i\in F_2\}$,
then $C_{D1}$ is a $[2047,11]$ linear code with constant weight $1024.$
\end{example}

\begin{corollary}\label{2D}
Let $\alpha$ be a root of  monic irreducible polynomial $f(x)$ over $F_p$ of degree $m,$ $c_{i,1}=(Tr_1^m(\alpha^ix))_{(x,y)\in D},$ $c_{i,2}=(Tr_1^m(\alpha^iy))_{(x,y)\in D}$ and $c_0=(Tr_1^m(x+y))_{(x,y)\in D}$.  Let $C_{D2}$ be a linear code generated by $\{c_{i,1},c_{i,2}:1\leq i\leq m-1\}\cup\{c_0\}$. Then $C_{D2}$ is a three-weight
linear code with parameters $[p^{2m-1}-1, 2m-1]$ and weight distribution in {\rm Table \ref{tab2}}.

\begin{center}
\begin{table}[h]
\centering
\caption{Weight distribution of $C_{D2}$}\label{tab2}%
\begin{tabular}{@{}ll@{}ll@{}}
\hline
Type&Weight & Frequency \\
\hline
$w_0$&0&1\\
$w_1$&$p^{2m-2}(p-1)~$&$p^{2m-1}-1-p^{m-1}(p-1)$\\
$w_3$&$(p^{2m-2}-p^{m-2})(p-1)-p^{m-2}(p^t(p-1)-S)~$&$(p-1)(p^t+S-p+1)p^{t-2}$\\
$w_4$&$(p^{2m-2}-p^{m-2})(p-1)+p^{m-2}(p^t+S)~$&$(p-1)(p^{t+1}-p^{t}+p-1-S)p^{t-2}$\\
\hline
\end{tabular}
\end{table}
\end{center}
\end{corollary}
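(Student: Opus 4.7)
The plan is to identify $C_{D2}$ as the image of the $F_p$-linear map $(a,b)\mapsto c(a,b)$ restricted to the hyperplane $W=\{(a,b)\in F_q\times F_q:\pi_0(a)=\pi_0(b)\}$, where $\pi_0(x)$ denotes the coefficient of $1$ in the expansion of $x\in F_q$ in the basis $(1,\alpha,\ldots,\alpha^{m-1})$. Rewriting an arbitrary element of $C_{D2}$ as $\sum_{i=1}^{m-1} s_i c_{i,1}+\sum_{i=1}^{m-1} t_i c_{i,2}+u c_0 = c\bigl(u+\sum_i s_i\alpha^i,\,u+\sum_i t_i\alpha^i\bigr)$ with $s_i,t_i,u\in F_p$ shows that the parameter pair ranges exactly over $W$, and $|W|=p^{2m-1}$. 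Theorem~3.1 records that $c(a,b)=0$ occurs only at $(a,b)=(0,0)$, so this map is injective on $W$, giving $\dim C_{D2}=2m-1$; the length $p^{2m-1}-1$ is inherited from $C_D$.

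Next I would intersect each weight class of $C_D$ with $W$. From the case analysis underlying Theorem~3.1, the weights $w_1,w_2,w_3,w_4$ correspond, respectively, to $a\in F_q\setminus F_p^*$ with $(a,b)\neq(0,0)$; $a\in F_p^*$ with $b=0$; $a\in F_p^*,\,b\neq 0$ with $Tr_1^m(v_b^{p^t-1})=0$; and the same with $Tr_1^m(v_b^{p^t-1})\neq 0$. The class $w_2$ is automatically empty in $C_{D2}$ because $\pi_0(a)=a\neq 0$ contradicts $\pi_0(b)=\pi_0(0)=0$, which is exactly why $w_2$ disappears from Table~\ref{tab2}. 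For the $w_3$ and $w_4$ classes the constraint $\pi_0(a)=\pi_0(b)$ with $a\in F_p^*$ forces $\pi_0(b)=a\in F_p^*$; since $H\cap F_p=\{0\}$ for $H:=\ker\pi_0$, the coset $a+H$ lies entirely in $F_q^*$, contributing exactly $(p-1)p^{m-1}$ parameter pairs to $w_3+w_4$. Splitting these by the condition $Tr_1^m(v_b^{p^t-1})=0$ and exploiting the partition $F_q^*=\bigsqcup_{v\in\Gamma}(K_v\setminus\{0\})$ with $K_v=\{b\in F_q:Tr_t^m(bv)=0\}$ an $F_{p^t}$-line of size $p^t$, each $K_v$ contributes either $0$ elements (when $K_v\subseteq H$) or $(p-1)p^{t-1}$ elements to $\{b:\pi_0(b)\neq 0\}$; the first case singles out a unique $v^*\in\Gamma$, namely the $\Gamma$-representative of the $F_{p^t}^*$-orbit of the element $\delta\in F_q^*$ characterized by $\pi_0(x)=Tr_1^m(\delta x)$.

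The hard step will be the identity $v^*\in V_0$, equivalently
\[Tr_1^m(\delta^{p^t-1})=0,\]
where $V_0:=\{v\in\Gamma:Tr_1^m(v^{p^t-1})=0\}$; this decides whether the exceptional line $K_{v^*}$ lies inside or outside the set enumerated by $Tr_1^m(v_b^{p^t-1})=0$, and therefore whether the $w_3$-frequency carries the correction term $-(p-1)p^{t-1}$ required by Table~\ref{tab2}. I plan to verify it by combining the explicit dual-basis expression $\delta=-f(0)/(\alpha f'(\alpha))$ coming from the partial-fraction expansion of $1/f(x)$ with a direct Frobenius manipulation in the quadratic extension $F_q/F_{p^t}$, which reduces the claim to $Tr_1^m((\alpha f'(\alpha))^{p^t-1})=0$. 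Granting this identity, the character-sum evaluation $|V_0|=(p^t+1+S)/p$ (obtained by applying Lemma~\ref{equl} to count $\{w\in\Delta:Tr_1^m(w)=0\}$) yields the $w_3$-frequency $(|V_0|-1)(p-1)p^{t-1}=(p-1)(p^t+S-p+1)p^{t-2}$, exactly matching Table~\ref{tab2}; the $w_4$-frequency is then the complement $(p-1)p^{m-1}$ minus this, and the $w_1$-frequency follows by subtraction from $|W|-1=p^{2m-1}-1$, closing the argument.
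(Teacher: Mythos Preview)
Your identification of $C_{D2}$ with the image of the hyperplane $W$, the elimination of the $w_2$ class, the injectivity argument, and the partition of $F_q^*$ by the $F_{p^t}$-lines $K_v$ are all correct and agree with the paper's setup. Where you diverge is in the final count: the paper does \emph{not} attempt to split the $(p-1)p^{m-1}$ pairs with $a\in F_p^*$ directly between the $w_3$ and $w_4$ classes. Instead it records $A_{w_1}=p^{2m-1}-1-(p-1)p^{m-1}$ from Lemma~\ref{weight}, verifies $A_1^\perp=0$ for $C_{D2}^\perp$, and solves for $A_{w_3},A_{w_4}$ from the first two Pless power moments (Lemma~\ref{Pless}). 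That $2\times 2$ linear system determines $A_{w_3}$ and $A_{w_4}$ uniquely and yields Table~\ref{tab2} with no further input about~$\alpha$.

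Your route, by contrast, hinges entirely on the ``hard step'' $Tr_1^m(\delta^{p^t-1})=0$, i.e.\ $v^*\in V_0$. This identity is in fact true for \emph{every} generator $\alpha$ of $F_q/F_p$ --- it is forced by the Pless computation, since your alternative case $v^*\notin V_0$ would give $A_{w_3}=|V_0|(p-1)p^{t-1}=(p^t+1+S)(p-1)p^{t-2}$, contradicting the Pless-determined value. But your plan to prove it directly is only a restatement, not a proof: since $-f(0)\in F_p^*$ one has $\delta^{p^t-1}=(\alpha f'(\alpha))^{-(p^t-1)}$, and for any $w\in\Delta$ the relation $w^{p^t}=w^{-1}$ gives $Tr_1^m(w^{-1})=Tr_1^m(w)$, so the two formulations $Tr_1^m(\delta^{p^t-1})=0$ and $Tr_1^m((\alpha f'(\alpha))^{p^t-1})=0$ are trivially equivalent and neither is evidently zero. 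You have not indicated how a ``Frobenius manipulation'' would actually annihilate this trace for an arbitrary irreducible $f$, and I do not see a short elementary argument that does so. Unless you can supply one, the cleanest fix is to adopt the paper's Pless step; alternatively, if you wish to preserve the direct combinatorial count, you can invoke the Pless identities once to certify $v^*\in V_0$ and then proceed exactly as you wrote.
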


\begin{example}
Let $\alpha$ be a root of irreducible polynomial $f(x)=x^6+x+1$ over $F_2$ and $t=3$. By magma, we can obtain the parameter of code $C_{D2}$ is $[2047, 11, 960],$ and the weight enumerator is $1 +24x^{960} +2015x^{1024} +8x^{1216},$ which
agree with the distribution in {\rm Table \ref{tab2}}.
\end{example}

%
%
%

\section{Proofs of main results}
In this section, we will give proofs of Theorem \ref{D}, Corollaries \ref{1D} and \ref{2D}. To complete these proofs, we first give some results which are necessary for calculating the lengths and weights of linear codes given in the Section 3.

\begin{lemma}\label{length}
Let $$n=|\{(x,y)\in (F_{q}\times F_q)\setminus \{(0,0)\}:Tr_{1}^{m}(x+y^{p^t-1})=0\}|,$$ then $n=p^{2m-1}-1.$
\end{lemma}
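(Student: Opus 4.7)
The plan is to count the set $N=\{(x,y)\in F_q\times F_q : Tr_1^m(x+y^{p^t-1})=0\}$ (including the pair $(0,0)$, with the convention $0^{p^t-1}=0$) by the standard character-sum trick and then subtract $1$ to get $n=N-1$.

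First, I would invoke the orthogonality relation for the canonical additive character $\chi_1$ of $F_p$ (Lemma~\ref{character}): for any $a\in F_p$, $\sum_{z\in F_p}\xi_p^{za}$ equals $p$ if $a=0$ and $0$ otherwise. Taking $a=Tr_1^m(x+y^{p^t-1})$ and summing over all $(x,y)\in F_q\times F_q$, this gives
\begin{equation*}
N \;=\; \frac{1}{p}\sum_{x,y\in F_q}\sum_{z\in F_p}\xi_p^{z\cdot Tr_1^m(x+y^{p^t-1})} \;=\; \frac{1}{p}\sum_{x,y\in F_q}\sum_{z\in F_p}\chi\bigl(z(x+y^{p^t-1})\bigr),
\end{equation*}
where I used $\xi_p^{z\cdot Tr_1^m(w)}=\chi(zw)$ for $z\in F_p$.

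Next, I would split the inner sum according to $z=0$ versus $z\neq 0$. The contribution from $z=0$ is plainly $q^2=p^{2m}$. For each fixed $z\in F_p^*$, I would factor
\begin{equation*}
\sum_{x,y\in F_q}\chi(zx)\chi(zy^{p^t-1}) \;=\; \Bigl(\sum_{x\in F_q}\chi(zx)\Bigr)\Bigl(\sum_{y\in F_q}\chi(zy^{p^t-1})\Bigr),
\end{equation*}
and note that the first factor is $0$ by Lemma~\ref{character} applied to $\chi$ on $F_q$ (since $z\neq 0$). Hence every $z\in F_p^*$ contributes $0$, regardless of what the $y$-sum is. Combining everything gives $N=p^{2m}/p=p^{2m-1}$.

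Finally, I would subtract the forbidden pair $(0,0)$, which lies in $N$ because $Tr_1^m(0+0)=0$, to conclude $n=p^{2m-1}-1$. There is no real obstacle: the only point needing a word of justification is the convention $0^{p^t-1}=0$ used when $y=0$, so that the defining trace equation makes sense on all of $F_q\times F_q$ and the orthogonality argument applies cleanly.
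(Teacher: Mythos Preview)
Your proof is correct and is essentially the same as the paper's: both use the orthogonality relation (Lemma~\ref{character}) to write the count as a character sum, separate the $z=0$ term from $z\in F_p^*$, and observe that $\sum_{x\in F_q}\chi(zx)=0$ for $z\neq 0$ kills the nontrivial part. The only cosmetic difference is that you count $N$ first and then subtract $1$, whereas the paper subtracts $1$ at the outset.
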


\begin{proof}
$n$ can be expressed in terms of character sums as
\begin{eqnarray*}
n&=&\sum\limits_{x,y\in F_{q}}\frac{1}{p}\sum\limits_{z\in F_{p}}\xi_p^{zTr_{1}^{m}(x+y^{p^t-1})}-1\\
&=&\frac{q^2}{p}-1+\frac{1}{p}\sum\limits_{z\in F^*_{p}}\sum\limits_{x\in F_{q}}\chi(zx)\sum\limits_{y\in F_{q}}\chi(zy^{p^t-1})\\
&{=}&p^{2m-1}-1.~\ { \rm (by~Lemma~\ref{character})}
\end{eqnarray*}
\end{proof}

\begin{lemma}\label{weight}
Let $$N(a,b)=|\{(x,y)\in (F_{q}\times F_q)\setminus \{(0,0)\}:Tr_1^m(x+y^{p^t-1})=0,Tr_1^m(ax+by)=0\}|,$$ then
$$N(a,b)=\left\{\begin{array}{ll}
p^{2m-2}-1,& a\not\in F_p^*,\\
p^{2m-2}-1+p^{m-2}(p-1)+p^{m-2}(p^t-1)S,&a\in F_p^*,b=0,\\
p^{2m-2}-1+(p^{m-2}+p^{m+t-2})(p-1)-p^{m-2}S ,&a\in F_p^*, Tr_1^m(v_{b}^{p^t-1})=0,\\
p^{2m-2}-1+p^{m-2}(p-1)-p^{m+t-2}-p^{m-2}S ,&a\in F_p^*,Tr_1^m(v_{b}^{p^t-1})\not=0,
\end{array}\right.$$
where $v_b$ is the only one element in $\Gamma$ such that $Tr^m_t (bv_b)=0$ due to {\rm Lemma \ref{tanpan}}.
\end{lemma}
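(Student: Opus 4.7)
The plan is to convert $N(a,b)$ into character sums via Lemma \ref{character} and then reduce, following the same template used in Lemma \ref{length}. First I would write
$$N(a,b) = \frac{1}{p^2}\sum_{z_1,z_2\in F_p}\sum_{x,y\in F_q}\chi\bigl(z_1(x+y^{p^t-1}) + z_2(ax+by)\bigr) - 1,$$
isolate the $(z_1,z_2)=(0,0)$ term, and separate the $x$ and $y$ variables in the rest to obtain
$$N(a,b) = p^{2m-2} - 1 + \frac{1}{p^2}\sum_{(z_1,z_2)\neq(0,0)}\sum_{x\in F_q}\chi((z_1+z_2a)x)\sum_{y\in F_q}\chi(z_1 y^{p^t-1}+z_2 by).$$
By Lemma \ref{character}, the $x$-sum equals $q$ when $z_1 + z_2a = 0$ and vanishes otherwise. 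If $a\notin F_p^*$, the equation $z_1 = -z_2a$ with $z_1,z_2\in F_p$ forces $(z_1,z_2) = (0,0)$, so the remainder vanishes and $N(a,b) = p^{2m-2}-1$, giving the first row. If $a\in F_p^*$, each $c := z_2\in F_p^*$ yields a contributing pair $(-ca,c)$, and using $q/p^2 = p^{m-2}$ reduces the problem to evaluating
$$N(a,b) = p^{2m-2} - 1 + p^{m-2}\sum_{c\in F_p^*}\sum_{y\in F_q}\chi\bigl(c(-ay^{p^t-1}+by)\bigr).$$

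Next I would handle the inner $y$-sum through the unique factorization $y = uv$ with $u\in F_{p^t}^*$ and $v\in\Gamma$. Since $y^{p^t-1} = v^{p^t-1}\in\Delta$, and $c\in F_p\subseteq F_{p^t}$, the $y\neq 0$ part splits as
$$\sum_{v\in\Gamma}\chi(-cav^{p^t-1})\sum_{u\in F_{p^t}^*}\chi_t\bigl(cu\cdot Tr_t^m(bv)\bigr).$$
When $b=0$ the inner $u$-sum is uniformly $p^t-1$, so the expression collapses to $(p^t-1)\sum_{w\in\Delta}\chi(-caw)$ after the bijective substitution $w = v^{p^t-1}$. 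When $b\neq 0$, Lemma \ref{tanpan} supplies a unique $v_b\in\Gamma$ with $Tr_t^m(bv_b)=0$; the $u$-sum is $p^t-1$ at $v = v_b$ and $-1$ elsewhere, producing $p^t\chi(-cav_b^{p^t-1}) - \sum_{w\in\Delta}\chi(-caw)$. In either case Lemma \ref{equl} replaces $\sum_{w\in\Delta}\chi(-caw)$ by $-K_t(c^2a^2)$, since $-ca\in F_{p^t}^*$.

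Finally I would add the $y=0$ contribution (one per $c$) and sum over $c\in F_p^*$. Because $a\in F_p^*$, the substitution $z = ca$ yields $\sum_{c\in F_p^*}K_t(c^2a^2) = \sum_{z\in F_p^*}K_t(z^2) = -S$ by the definition of $S$ and Lemma \ref{equl}. For $b\neq 0$ the remaining character sum satisfies
$$\sum_{c\in F_p^*}\chi(-cav_b^{p^t-1}) = \sum_{c'\in F_p^*}\xi_p^{c'\,Tr_1^m(v_b^{p^t-1})},$$
which equals $p-1$ when $Tr_1^m(v_b^{p^t-1}) = 0$ and $-1$ otherwise. Collecting the three subcases $b=0$, $b\neq 0$ with $Tr_1^m(v_b^{p^t-1})=0$, and $b\neq 0$ with $Tr_1^m(v_b^{p^t-1})\neq 0$ yields the last three rows of the formula. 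The main obstacle is the bookkeeping rather than any deep idea: the $b=0$ case must be peeled off separately since $v_b$ is undefined there, and a single sign slip in invoking Lemma \ref{equl} would propagate through every subcase and misalign the constants against the stated table.
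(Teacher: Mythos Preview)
Your approach is essentially identical to the paper's: both convert $N(a,b)$ to a double character sum over $(z_1,z_2)\in F_p\times F_p$, use orthogonality of the $x$-sum to restrict to $z_1=-z_2a$, and then for $a\in F_p^*$ evaluate the $y$-sum via the factorization $y=uv$ with $u\in F_{p^t}^*$, $v\in\Gamma$, invoking Lemma~\ref{tanpan} for the unique $v_b$ and Lemma~\ref{equl} to bring in $S$. The only slip is your sentence ``If $a\notin F_p^*$, the equation $z_1=-z_2a$ with $z_1,z_2\in F_p$ forces $(z_1,z_2)=(0,0)$'': this fails when $a=0$, since then every $(0,z_2)$ satisfies it, and you need the $y$-sum (which vanishes provided $b\neq 0$) to dispose of those terms --- the paper handles this by first splitting off the $z_1=0$ and $z_2=0$ boundary pieces, but both arguments tacitly assume $(a,b)\neq(0,0)$.
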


\begin{proof}
$N(a,b)$ can be expressed in terms of character sums as
\begin{eqnarray*}
N(a,b)
&=&\frac{1}{p^2}\sum\limits_{x,y\in F_{q}}\sum\limits_{z_{1}\in F_{p}}\xi_p^{z_{1}Tr_1^m(x+y^{p^t-1})}\sum\limits_{z_{2}\in F_{p}}\xi_p^{z_{2}Tr_1^m(ax+by)}-1\\
&=&\frac{1}{p^2}\sum\limits_{x,y\in F_q}(1+\sum\limits_{z_{1}\in F^*_{p}}\xi_p^{z_{1}Tr_1^m(x+y^{p^t-1})})(1+\sum\limits_{z_{2}\in F^*_{p}}\xi_p^{z_{2}Tr_1^m(ax+by)})-1\\
&=&\frac{q^2}{p^2}-1+\frac{1}{p^2}\sum\limits_{x,y\in F_{q}}\sum\limits_{z_{1}\in F^*_{p}}\chi(z_1x+z_1y^{p^t-1})\sum\limits_{z_{2}\in F^*_{p}}\chi(z_2ax+z_2by)\\
&+&\frac{1}{p^2}\sum_{x,y\in F_{q}}\sum\limits_{z_{1}\in F^*_{p}}\chi(z_1x+z_1y^{p^t-1})+\frac{1}{p^2}\sum\limits_{x,y\in F_{q}}\sum\limits_{z_{2}\in F^*_{p}}\chi(z_2ax+z_2by)\\
&=&\frac{q^2}{p^2}-1+\frac{1}{p^2}\sum\limits_{x,y\in F_{q}}\sum\limits_{z_{1}\in F^*_{p}}\chi(z_1x+z_1y^{p^t-1})\sum\limits_{z_{2}\in F^*_{p}}\chi(z_2ax+z_2by)~\ { \rm (by~Lemma~\ref{character})}\\
&=&\frac{q^2}{p^2}-1+\frac{1}{p^2}\sum\limits_{z_1,z_2\in F^*_{p}}\sum\limits_{x\in F_{q}}\chi((z_1+z_2a)x)\sum\limits_{y\in F_{q}}\chi(z_1y^{p^t-1}+z_2by)\\
&=&\frac{q^2}{p^2}-1+\frac{1}{p^2}\Omega.
\end{eqnarray*}

When $z_1+z_2a\not=0$, we have $\sum\limits_{x\in F_{q}}\chi((z_1+z_2a)x)=0$ for $z_1,z_2\in F^*_{p}$. Thus we only consider the case of $z_1=-z_2a$.
For $a\not \in F_p^*$, then we have $z_1+z_2a\not=0$ for any $z_1,z_2\in F^*_{p}.$ Thus
\begin{equation}\label{o1}
N(a,b)=\left\{\begin{array}{ll}
              p^{2m-2}-1,& a\not\in F_p^*,\\
             p^{2m-2}-1+\frac{1}{p^2}\Omega ,&a\in F_p^*.
             \end{array}\right.
\end{equation}
For $a\in F_p^*$ and $b=0$,
\begin{align}\label{o2}
    \Omega=&{q}\sum\limits_{z_2\in F^*_{p}}\sum\limits_{y\in F_{q}}\chi(-z_2ay^{p^t-1})~\ { \rm (by~Lemma~\ref{character})}\nonumber\\
     =&{q}\sum\limits_{z\in F^*_{p}}\sum\limits_{y\in F_{q}}\chi(zy^{p^t-1})={q}\sum\limits_{z\in F^*_{p}}(1+\sum\limits_{y\in F^*_{q}}\chi(zy^{p^t-1}))\nonumber\\
     =&{q(p-1)}+{q(p^t-1)}\sum\limits_{z\in F^*_{p}}       \sum\limits_{y\in \Delta}\chi(zy)\nonumber\\
     =&p^{m}(p-1)+p^{m}(p^t-1)S.
\end{align}
For $a\in F_p^*$ and $b\not=0$,
\begin{align}\label{o3}
 \Omega
 =&{q}\sum\limits_{z_2\in F^*_{p}}\sum\limits_{y\in F_{q}}\chi(-z_2ay^{p^t-1}+z_2by)\nonumber\\
=&{q}(p-1)+q\sum_{z_2\in F^*_{p}}\sum_{u\in F^*_{p^t}}\sum_{v\in\Gamma}\chi(-z_2av^{p^t-1}+z_2buv)\nonumber\\
=&{q}(p-1)+q\sum_{z_2\in F^*_{p}}\sum_{v\in\Gamma}\chi(-z_2av^{p^t-1})\sum_{u\in F^*_{p^t}}\chi(z_2buv)\nonumber\\
=&{q(p-1)}+{q}\sum_{z_2\in F^*_{p}}\sum_{v\in\Gamma}\chi(-z_2av^{p^t-1})(\sum_{u\in F_{p^t}}\chi(z_2buv)-1)\nonumber\\
=&{q(p-1)}+{q}\sum_{z_2\in F^*_{p}}\sum_{v\in\Gamma}\chi(-z_2av^{p^t-1})\sum_{u\in F_{p^t}}\chi(z_2buv)-{q}\sum_{z_2\in F^*_{p}}\sum_{v\in\Gamma}\chi(-z_2av^{p^t-1})\nonumber\\
=&{q(p-1)}+{q}\sum_{z_2\in F^*_{p}}\sum_{v\in\Gamma}\chi(-z_2av^{p^t-1})\sum_{u\in F_{p^t}}\chi_t(z_2Tr_t^m(bv)u)-{q}\sum_{z\in F^*_{p}}\sum_{x\in\Delta}\chi(zx)\nonumber\\
=&{q(p-1)}+{qp^t}\sum_{z_2\in F^*_{p}}\sum_{v\in\Gamma,Tr_t^m(bv)=0}\chi(-z_2av^{p^t-1})-{p^m}S~\ { \rm (by~Lemma~\ref{character})}\nonumber\\
=&{q(p-1)}+{qp^t}\sum_{z_2\in F^*_{p}}\chi(-z_2av_{b}^{p^t-1})-{p^m}S\nonumber\\
=&{q(p-1)}+{qp^t}(\sum_{z_2\in F_{p}}\chi_1(-z_2aTr_1^m(v_{b}^{p^t-1}))-1)-{p^m}S\nonumber\\
=&\left\{\begin{array}{ll}
             (p^m+p^{m+t})(p-1)-p^mS ,& Tr_1^m(v_{b}^{p^t-1})=0,\\
         p^m(p-1)-p^{m+t}-p^mS ,&Tr_1^m(v_{b}^{p^t-1})\not=0.
             \end{array}\right.
\end{align}
By Equations (\ref{o1}), (\ref{o2}) and (\ref{o3}), we can complete our proof.
\end{proof}

We next discuss the dual code of the linear code $C_D$ in Theorem \ref{D}. The Lemma \ref{dual} can be used to the computation of weight distribution of $C_D$.

\begin{lemma}\label{dual}
Let $d^\bot$ denote the minimum distance of the $C^{\bot}_D,$ then $2\leq d^\bot\leq4$.
\end{lemma}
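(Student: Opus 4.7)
My plan is to prove the two inequalities separately by working with the combinatorial description of the dual distance: $d^\bot$ equals the minimum number of columns of a generator matrix of $C_D$ that are linearly dependent over $F_p$. Identifying $F_q^2$ with $F_p^{2m}$, the column of the generator matrix indexed by $(x,y)\in D$ is uniquely determined by $(x,y)$ itself, because $Tr_1^m(ax+by)=0$ for all $a,b\in F_q$ forces $(x,y)=(0,0)$. Consequently, a weight-$w$ codeword in $C_D^{\bot}$ is exactly the data of $w$ distinct pairs $(x_i,y_i)\in D$ together with nonzero scalars $c_i\in F_p^*$ satisfying $\sum_{i=1}^w c_i(x_i,y_i)=(0,0)$ in $F_q\times F_q$.

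For the lower bound $d^\bot\geq 2$, I would note that a weight-$1$ dual codeword would require a single $(x,y)\in D$ with $c(x,y)=(0,0)$ for some $c\in F_p^*$, forcing $(x,y)=(0,0)$; since the defining set $D$ in $(\ref{D})$ excludes $(0,0)$, no such codeword exists.

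For the upper bound $d^\bot\leq 4$, I would construct an explicit low-weight dual codeword supported entirely on pairs of the form $(x,0)$. Because $Tr_1^m:F_q\to F_p$ is $F_p$-linear and surjective, its kernel $K$ has $F_p$-dimension $m-1=2t-1\geq 5$. For every nonzero $x\in K$ one has $(x,0)\in D$, since $Tr_1^m(x+0^{p^t-1})=Tr_1^m(x)=0$. For odd $p$, the pair $(x,0),(-x,0)$ with $x\in K\setminus\{0\}$ consists of two distinct points of $D$ and their sum in $F_q\times F_q$ is $(0,0)$, yielding a weight-$2$ element of $C_D^{\bot}$. For $p=2$, where this pair collapses to a single point, I would instead pick distinct nonzero $x_1,x_2\in K$ (available since $|K|\geq 2^5$), note that $x_1+x_2\in K\setminus\{0\}$, and use the three distinct points $(x_1,0),(x_2,0),(x_1+x_2,0)\in D$ whose $F_2$-sum vanishes to obtain a weight-$3$ element of $C_D^{\bot}$. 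In either case $d^\bot\leq 3\leq 4$.

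No genuine obstacle arises; the argument is essentially bookkeeping about the defining set. The only mild subtlety is the characteristic-$2$ case, where the symmetric construction $(x,0),(-x,0)$ degenerates to a single point and a third summand must be introduced to close the $F_p$-linear relation. The bound $d^\bot\leq 4$ stated in the lemma leaves slack compared with what the argument actually produces, which is presumably because only the weak bound is needed to apply the first two Pless power moments in the subsequent derivation.
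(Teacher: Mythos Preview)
Your proof is correct. The lower bound $d^\bot\ge 2$ is handled identically in both arguments: since $(0,0)\notin D$, no column of the generator matrix vanishes. For the upper bound, however, the paper takes a different, non-constructive route: it simply records that $C_D^\bot$ has parameters $[p^{2m-1}-1,\,p^{2m-1}-1-2m]$ and invokes the Hamming (sphere-packing) bound to conclude $d^\bot\le 4$. Your approach instead produces explicit low-weight dual codewords supported on points $(x,0)$ with $x$ in the trace-zero hyperplane, giving the sharper conclusions $d^\bot\le 2$ for odd $p$ (hence $d^\bot=2$ once combined with the lower bound) and $d^\bot\le 3$ for $p=2$. The paper's argument is shorter and uniform in $p$; yours is more informative and actually determines $d^\bot$ exactly in the odd case. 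As you observe, only $A_1^\bot=0$ is used downstream, so the extra precision is not needed for the application.
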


\begin{proof}
Since $(0,0)\not\in D\subseteq F_q\times F_q$, the minimum distance  $d^{\bot}$ of $C_D^{\bot}$ cannot be 1 and then $d^{\bot}\geq 2$. Since $C_D^{\bot}$ is a linear code with parameters $[p^{2m-1}-1, p^{2m-1}-1-2m]$, then we can obtain that $d^{\bot}\leq 4$ directly from Hamming bound \cite[Theorem 1.12.1]{2003-H}.
\end{proof}

\textbf{The proof of Theorem \ref{D}.}

According to Lemma \ref{length}, the length of a codeword in $C_{D}$
is $n=p^{2m-1}-1.$  Note that the weight of codeword $c(a,b)\in C_D$ is equal to $n-N(a,b)$. Then we can obtain the all weights of code $C_D$ in Table \ref{tab1} by Lemma \ref{weight}.
By Lemmas \ref{bound} and \ref{equl}, we have
\begin{equation}\label{S}
|S|\leq\sum\limits_{ z\in F_p^*}|\sum\limits_{x\in \Delta}\chi(zx)|=\sum\limits_{ z\in F_p^*}|K_t(z^2)|\leq2(p-1)\sqrt{p^t}.
\end{equation}
Thus, it is easy to prove that $w_i>0$ for any $1\leq i \leq 4$ which is equivalent to prove that the dimension of $C_D$ is $2m$.

By Lemma \ref{Pless} and $A_{1}^{\bot}=0$ by Lemma \ref{dual}, we have
$$\left\{\begin{array}{ll}
              \sum\limits_{i=1}^{4}A_{w_i}=p^{2m}-1,\\
             \sum\limits_{i=1}^{4}w_iA_{w_i}=(p-1)p^{2m-1}n.
             \end{array}\right.$$
In addition we have $A_{w_1}=p^m(p^m-p+1)-1$ and $A_{w_2}=p-1$ by Lemma \ref{weight}. Thus
we can obtain the weight distribution in Table 1 of $C_D$ by solving the system of equations.

\textbf{The proof of Corollary \ref{1D}.}

It is easy to observe that all codewords represented by $c(a,b)$ in $C_{D1}$ satisfy $a\not \in F_p^*.$  Thus we can obtain the results by Lemma \ref{weight}.

\textbf{The proof of Corollary \ref{2D}.}

From the assumption, we can know that $$C_{D2}=\{(Tr_{1}^m((\sum\limits_{i=1}^{m-1}\lambda_i\alpha^i+\nu)x+(\sum\limits_{i=1}^{m-1}\mu_i\alpha^i+\nu)y))_{(x,y)\in D}:\lambda_i, \mu_i, \nu\in F_p\}.$$

Then the number of codewords such that $\sum\limits_{i=1}^{m-1}\lambda_i\alpha^i+\nu\in F_p^*$ in $C_{D2}$ for $\lambda_i, \nu\in F_p$ is $(p-1)p^{m-1}.$
Thus $A_{w_1}=p^{2m-1}-p^{m}+p^{m-1}-1$ by Lemma \ref{weight}.

We will calculate the minimum distance $\overline{d}^\bot$ of $C_{D2}^{\bot}$. We claim that $\overline{d}^\bot\geq2.$   If $\overline{d}^\bot=1$, then there exist a nonzero element $(x_0,y_0)\in D$ such that $$Tr_{1}^{m}((\sum\limits_{i=1}^{m-1}\lambda_i\alpha^i+\nu)x_0+(\sum\limits_{i=1}^{m-1}\mu_i\alpha^i+\nu)y_0)=0$$
for any $\lambda_i, \mu_i, \nu\in F_p$.  By $(x_0,y_0)\not=(0,0)$, we assume that $x_0\not=0$ without loss of generality.  We have $\{\sum\limits_{i=1}^{m-1}\lambda_i\alpha^i+\nu:\lambda_i,\nu\in F_p\}=F_q$  and then  $\{(\sum\limits_{i=1}^{m-1}\lambda_i\alpha^i+\nu)x_0+(\sum\limits_{i=1}^{m-1}\mu_i\alpha^i+\nu)y_0:\lambda_i, \mu_i, \nu\in F_p\}=F_q$ by $x_0\not=0$.  Thus it is impossible that $\overline{d}^\bot=1$ by $\{Tr_{1}^{m}((\sum\limits_{i=1}^{m-1}\lambda_i\alpha^i+\nu)x_0+(\sum\limits_{i=1}^{m-1}\mu_i\alpha^i+\nu)y_0):\lambda_i, \mu_i, \nu\in F_p\}=F_p$, which indicates $A_{1}^{\bot}$ of $C^{\bot}_{D2}$ is equal to 0.

In addition, all codewords represented by $c(a,b)$ in $C_{D2}$ satisfy $b\not=0$ when $a\in F_p^*$.  Thus we can obtain the results in Table \ref{tab2} by Lemma \ref{Pless} and Lemma \ref{weight}.

\section{Conclusion}
 In this paper, we studied some classes of linear codes with few weights, and determined their lengths and weight distributions  according to some known results on character sums and Kloosterman sums.  From the construction of Corollaries \ref{1D} and $\ref{2D}$, we can also obtain more linear codes with the same minimum distance as $C_{D1}$ or $C_{D2}$ but lower dimension resulting in having no further discussions.

Besides, linear codes over $F_p$ have wide applications which
are used for the construction of secret sharing schemes \cite{1993-M, 2015-D}. The support of a codeword ${\bf c}=(c_0,c_1,\ldots,c_{n-1})\in C$ is defined as $supp({\bf c})=\{0\leq i\leq n-1:c_{i}\not=0\}.$
We say that a codeword $\bf{b}$ is covered by a codeword $\bf{a}$ if $supp(\bf{b})$ is the proper subset of $supp(\bf{a})$.
A nonzero codeword of a linear code $C$ is said minimal if it does not cover any other nonzero codeword of $C$.
A code $C$ is said to be minimal if every nonzero codeword is minimal.  In order to obtain secret sharing with interesting access structures, we would like to have linear codes $C$ such that
$\frac{w_{min}}{w_{max}}>\frac{p-1}{p}$ \cite{1998-A},
where $w_{win}$ and ${w_{max}}$ denote the minimum and the maximum nonzero weights in $C$, respectively.
For the linear code $C_{D1}$ of Corollary \ref{1D}, we have
$$\frac{w_{min}}{w_{max}}=1>\frac{p-1}{p},$$ and through some simple and tedious calculations, we have
$$\frac{w_{min}}{w_{max}}=\frac{w_3}{w_4}>\frac{p-1}{p}$$ by Formula (\ref{S}) for the linear code  $C_{D2}$ of Corollary \ref{2D}. Then $C_{D1}$ and $C_{D2}$ are minimal linear codes, which can be employed to obtain secret sharing schemes with interesting access structures\cite{2015-D}.


\end{document}